%
\documentclass[runningheads]{llncs}
\usepackage[T1]{fontenc}
%
\usepackage{graphicx}
%
%
\usepackage{amsmath,amssymb,amsfonts}
\usepackage{algorithmic}
\usepackage{algorithm}

\usepackage{cite}
\begin{document}
\title{Profit Maximization in Social Networks and Non-monotone DR-submodular Maximization}
%
%
\author{Shuyang Gu\inst{1}\orcidID{0000-0003-4535-2280} \and
Chuangen Gao\inst{2}\and
Jun Huang\inst{3}\and
Weili Wu\inst{4}}
\authorrunning{S. Gu et al.}
%
\institute{Department of Computer Information Systems, Texas A\&M University - Central Texas, Killeen TX 76549, USA \\
	\email{s.gu@tamuct.edu}\\
	\and
School of Computer Science and Technology, Qilu Technology University, China\\
\email{gaochuangen@gmail.com}\\
\and
Department of Computer Science, Baylor University, Waco, TX \\
\email{huangj@ieee.org}
\and
Department of Computer Science, The University of Texas at Dallas, Dallas, TX \\
\email{weiliwu@utdallas.edu}}
\maketitle              
\begin{abstract}
In this paper, we study the non-monotone DR-submodular function maximization over integer lattice. Functions
over integer lattice have been defined submodular property that is similar to submodularity of set functions. 
DR-submodular is a further extended submodular concept for functions
over the integer lattice, which captures the diminishing return property. Such functions find many applications in machine learning, social networks, wireless networks, etc. The techniques for submodular set function maximization can be applied to DR-submodular function maximization, e.g., the double greedy algorithm has a $1/2$-approximation ratio, whose running time is $O(nB)$, where $n$ is the size of the ground set, $B$ is the integer bound of a coordinate.
In our study, we design a $1/2$-approximate binary search double greedy algorithm, and we prove that its time complexity is $O(n\log B)$, which significantly improves the running time. 
Specifically, we consider its application  to the profit maximization problem in social networks with a bipartite model, the goal of this problem is to maximize the net profit gained from a product promoting activity, which is the difference of the influence gain and the promoting cost. We prove that the objective function is DR-submodular over integer lattice.  We apply binary search double greedy algorithm  to this problem and verify the effectiveness.

\keywords{Profit Maximization \and  Social Networks \and DR-submodular  \and binary search double greedy \and approximation algorithm.}
\end{abstract}
\section{Introduction}

A lot of real-world problems have objective functions with a so-called submodular property, which reflects the diminish return nature of the problems. Since such property exists in a vast amount of applications, submodular optimization has caught a lot of attention during the past two decades. 

Submodular set function optimization includes maximizing or minimizing a submodular function with or without some constraints. One of the directions is non-monotone submodular maximization without constraint. This problem is, given a non-negative submodular set function $f$, to find a subset $S$ that maximizes $f(S)$. Since this problem captures many applications in machine learning, viral marketing, etc., it has been studied extensively. A deterministic local search gives a $1/3$ -approximation and a randomized smoothed local search algorithm gives $2/5$-approximation \cite{feige2011maximizing}. Buchbinder et al. further improve that result, they show that a deterministic double greedy algorithm provides 1/3-approximation, and the randomized version of it gives a 1/2-approximation, both in linear time \cite{buchbinder2015tight}.

Recently, submodular optimization has been extended to functions over integer lattice, which considers the situation that each element in the ground set can be selected as multiple copies. The functions over integer lattice may also have submodular property, which is defined similarly to set functions' submodular. 
Due to the relationship between the two types of submodular, the techniques for submodular set function optimization can be applied to lattice submodular optimization. Based on the  double greedy algorithm in \cite{buchbinder2015tight}, an algorithm for submodular functions over the bounded integer lattice can be designed with $1/3$ approximation ratio \cite{gottschalk2015submodular}. 

An interesting fact is that although the definition of lattice submodular is similar to set function submodular, the lattice submodular does not imply diminish return property like submodular set function. Soma et al. \cite{soma2015generalization} thus give a stronger generalization of submodularity on integer lattice, which is called diminishing return submodular (DR-submodular) functions, such functions capture various applications with diminishing return property. 
In this paper, we specifically study the non-monotone DR-submodular maximization problem and one of the applications: profit maximization in social networks.

The development of online social networks such as Facebook and Twitter provides opportunities for large-scale online viral marketing. Under this circumstance, influence maximization becomes a very popular research direction, which could be described as the problem of finding a small set of most influential nodes in a social network so that the spread of influence in the network is maximizeds. A large amount of efforts have been made on this research topic since Kempe \textit{et al.} \cite{kempe2003maximizing} first defined the problem and obtained plentiful results in many ways\cite{chen2009efficient, gu2020general, gao2020interaction}. 
Those studies are based on the assumption that the number of influenced users determines the profit of product \cite{domingos2001mining, kempe2003maximizing, gao2022adaptive, gao2019robust, ni2021continuous}. 
The influence maximization (IM) problem and the problems extended from IM adopt the cadinality or knapsack constraint, which represents the seeding budget in a product promotion activity. Under such setting, the solution, which is a set of the most influential seed nodes, should maximizes the expected influenced nodes with the constraint of seeds number. Since the influence spread is a non-decreasing function of the seeds set\cite{kempe2003maximizing}, the solution seeds set is always of the limit size. In this paper, we study the overall profit maximization, our goal is to obtain the optimal investment to achieve the maximum net profit, which take into account the varying cost of seeding that is proportional to the number of seeds. In this problem, the objective function is not monotone and there is no cardinality or knapsack constraint.

%
%
%

The contributions of this paper are summarized as follows.
\begin{itemize}
	\item We propose a novel problem named profit  maximization
	\item We prove the profit maximization problem is NP-hard and DR-submodular.
	\item To solve the non-monotone DR-submodular maximization problem, we propose the binary search double greedy algorithm.
	\item We prove the algorithm has a 1/2- approximate ratio and the time complexity is polynomial($n\log B$). To the best of our knowledge, this is the fastest algorithm with the least queries to the objective function.
\end{itemize}

This article is organized as follows. In Section 2, we review the existing work. In section 3, we present the preliminaries for submodular and DR submodular optimization problems. Then we introduce the profit maximization problem in section 4 and show the objective function is  DR-submodular. In section 5, the binary search double greedy algorithm is presented to solve the DR-submodular maximization problem, we give theoretical proof of the approximation ratio and time complexity. The conclusion is presented in Section 6.

\section{Related Work}
 Non-monotone DR-submodular Maximization is closely related to non-monotone submodular set function maximization because the algorithm for the latter problem can be applied to the former problem directly. The non-monotone submodular maximization is also called Unconstrained Submodular Maximization(USM). USM has various applications, such as marketing strategies  over social networks \cite{hartline2008optimal}, Max-Cut\cite{goemans1995improved}, and maximum facility location \cite{ageev19990}. USM problem has been studied extensively\cite{ gharan2011submodular, buchbinder2018deterministic, pan2014parallel}.  Buchbinder gives a tight linear time randomized (1/2)-approximation for the problem\cite{buchbinder2015tight}.
 
  The topic of functions over integer lattice optimization has attracted much attention recently, the submodularity of such functions are considered. Monotone submodular functions over integer lattice with cardinality constraint are addressed in \cite{soma2018maximizing, lai2019monotone, zhang2021streaming}. Sahin et al. study lattice submodular functions subject to a discrete (integer) polymatroid constraint \cite{sahin2020constrained}. Zhang et al. study the problem of maximizing the sum of a monotone non-negative DR-submodular function and a supermodular function on the integer lattice subject to a cardinality constraint\cite{zhang2021maximizing}. The non-submodular functions on the integer lattice are addressed in \cite{kuhnle2018fast}. Nong et al. focus on maximizing a non-monotone weak-submodular function on a bounded integer lattice \cite{nong20201}. For the problem addressed in this paper, non-monotone DR-submodular function maximization, Soma et al. design a $\frac{1}{2+\epsilon}$-approximation algorithm with 
  a running time of $O(\frac{n}{\epsilon} \log^2B)$ \cite{soma2017non}.

In the meantime, the discrete domains of submodular functions over integer lattice are further extended to continuous domains, Hassani et al. study
stochastic projected gradient methods for maximizing continuous submodular functions with convex constraints \cite{hassani2017gradient}. In \cite{bian2017continuous, niazadeh2018optimal}, the authors consider maximizing a continuous and nonnegative submodular function over a hypercube.

 For optimization problems in social networks, many fall into submodular set function maximization subject to a cardinality constraint \cite{kempe2003maximizing, gu2020general, gao2022adaptive, gao2019robust}. The budget allocation problem \cite{soma2014optimal}  falls into monotone submodular function maximization over integer lattice subject to a knapsack constraint.

\section{Preliminaries}

We say that a set function $g:  2^E \to \mathbb{R_+} $ is submodular if it satisfies following inequality: 
\begin{align*}
g(X) + g(Y) \geq g (X \cup Y ) + g (X \cap Y),  \forall X, Y \subseteq E
 \end{align*} 

The submodular definition also reflects a natural “diminishing returns” property: the marginal gain of adding an element to a set $X$ is at least as high as the marginal gain of adding the same element to a superset of $X$. Formally, 
for every set $X,Y $ such that $X \subseteq Y \subseteq E$ and every $e \in E \setminus Y$, it follows that
$$g(X\cup\{e\})-g(X) \ge g(Y\cup \{e\})-g(Y)$$ 

The monotonicity definition for a set function is that a function is monotone non-decreasing iff $ g(X) \le g(Y)$ for $\forall X \subseteq Y$. 

Functions over integer lattice may have similar property as submodular set functions. A function  $h:\mathbb{Z}_+^E \to \mathbb{R_+} $ that is defined over the integer lattice is submodular if the following holds\cite{gottschalk2015submodular}:
\begin{align*}
h(\boldsymbol{x} ) + h(\boldsymbol{y} ) \geq h(\boldsymbol{x}  \lor\boldsymbol{y} ) + h (\boldsymbol{x}  \land \boldsymbol{y} ),   \boldsymbol{x} ,\boldsymbol{y}   \in \mathbb{Z}_+^E.
\end{align*} 
where $(\boldsymbol{x} \lor \boldsymbol{y})(i) = \max\{\boldsymbol{x} (i),\boldsymbol{y} (i)\}$ and $(\boldsymbol{x} \land \boldsymbol{y} )(i) = \min\{\boldsymbol{x} (i), \boldsymbol{y} (i)\}$. The vectors $\boldsymbol{x} $ and $\boldsymbol{y}$ have the same dimensions.  Hence $\boldsymbol{x} \lor \boldsymbol{y}$ represents coordinate-wise maximum, and  $\boldsymbol{x} \land \boldsymbol{y}$ denote the
coordinate-wise minimum. We can see this form of submodularity is a more generalized definition of submodularity that covers set functions submodular, because vectors with all entries equal to either 0 or 1 can be seen as a subset including the elements that are equal to $1$ while excluding the elements that are equal to $0$, in that case, $\boldsymbol{x} \land \boldsymbol{y}$ and $\boldsymbol{x} \lor \boldsymbol{y}$ transform to set intersection and set union of the subsets that $\boldsymbol{x}$ and $\boldsymbol{y}$ represent respectively . 

The submodular function over integer lattice does not have the diminishing return property. To capture such property in real-world problems, a stronger version of submodularity has been introduced, which is called DR-submodular\cite{soma2015generalization}. DR submodular function on a bounded integer lattice satisfies the following diminish return property: 
\begin{align*}
h(\boldsymbol{x} +\chi_e)-h(\boldsymbol{x} )\geq h(\boldsymbol{y} +\chi_e)-h(\boldsymbol{y} ), \forall \boldsymbol{x} \leq \boldsymbol{y},  \forall e \in E
\end{align*} 
where $\chi_e$ denotes a unit vector, i.e. $\chi_e \in \mathbb{Z}^E$ is the vector with $\chi_e(e) = 1 $ and $\chi_e(a) = 0$ for every $a \neq e$.

The problem we consider is maximizing (non-monotone) DR-submodular functions over bounded integer lattice. Formally, we study the optimization problem
	\begin{eqnarray}
\begin{aligned}
\max && f(x) \\
\mbox{subject to} && \boldsymbol{0}\leq x \leq \boldsymbol{B},
\end{aligned}
	\end{eqnarray}
where $f : \mathbb{Z}_+^E \to \mathbb{R}_+$ is a non-negative DR-submodular
function and not necessarily monotone.  $\boldsymbol{0}$ is the all zero vector, and $\boldsymbol{B} \in \mathbb{Z}_+^E$ is a vector representing the maximum value for each coordinate. When $\boldsymbol{B}$ is the all-ones vector, the problem is equivalent to the original unconstrained submodular set function maximization. We assume that $f$ is given as an evaluation oracle; when we specify $\boldsymbol{x} \in \mathbb{Z}_+^E$, the oracle returns the value of $f(\boldsymbol{x})$. We define  $f( \boldsymbol{x}|\boldsymbol{y}) = f( \boldsymbol{y} + \boldsymbol{x}) - f( \boldsymbol{y})$.

\section{Profit  Maximization Problem}\label{formulation}
In this section, we formulate  profit  maximization problem in bipartite model formally and prove  the objective function  is DR-submodular.

We adopted the bipartite influence model \cite{alon2012optimizing}, by which a social network can be modeled as a bipartite graph. There are two types of nodes, the source nodes (marketing channels) and the target nodes (potential customers). The marketing channels include the social medias, TV, newspapers etc., which may initiat marketing influence to potential customers. Each marketing channels can affect a subset of  potential customers, which are the customers follows the channel. For example,  a social media account has a group of followers, a TV channel has relatively fixed audience. 

Another assumption is that each marketing channel may influence its potential customers multiple times, this assumption makes sense since a social media account may post a product promotion advertisement multiple times, so does a TV channel, a newspaper or other marketing channels. Each time when a marketing channel initiate an influence, the influence propogate to all the audience of that channel. If a marketing channel initiate $x$ trials of influence, then a viewer recieves the advertisement $x$ times, we assume the $x$ trials of influence events are independent .

The bipartite graph is denoted $G = (S,T;E)$, where $S$ is the set of source nodes(marketing channels), $T$ is the set of target nodes(potential customers), $E\subseteq S \times T$ is the edge set. An
edge of the bipartite graph between a source node  $i$ and a target node  $j$  indicates $i$ may influence $j$ with some probability. If a customer does not follow some channel, there is no edge between them. Each source node s has a capacity $c(s) \in  \mathbb{N_+}$, which stands for the trials limit at that source node.
For each source node $s$, there are a series of probabilities $p_s^1$, $p_s^2$, $ \dots p_s^{c(s)}$, where $p_s^{(i)}$ represents the probability to activate any adjacent target node of the $s$'s $i$th  trials of influence. 

\begin{figure}[tbp]	
	\centerline{\includegraphics[scale=0.4]{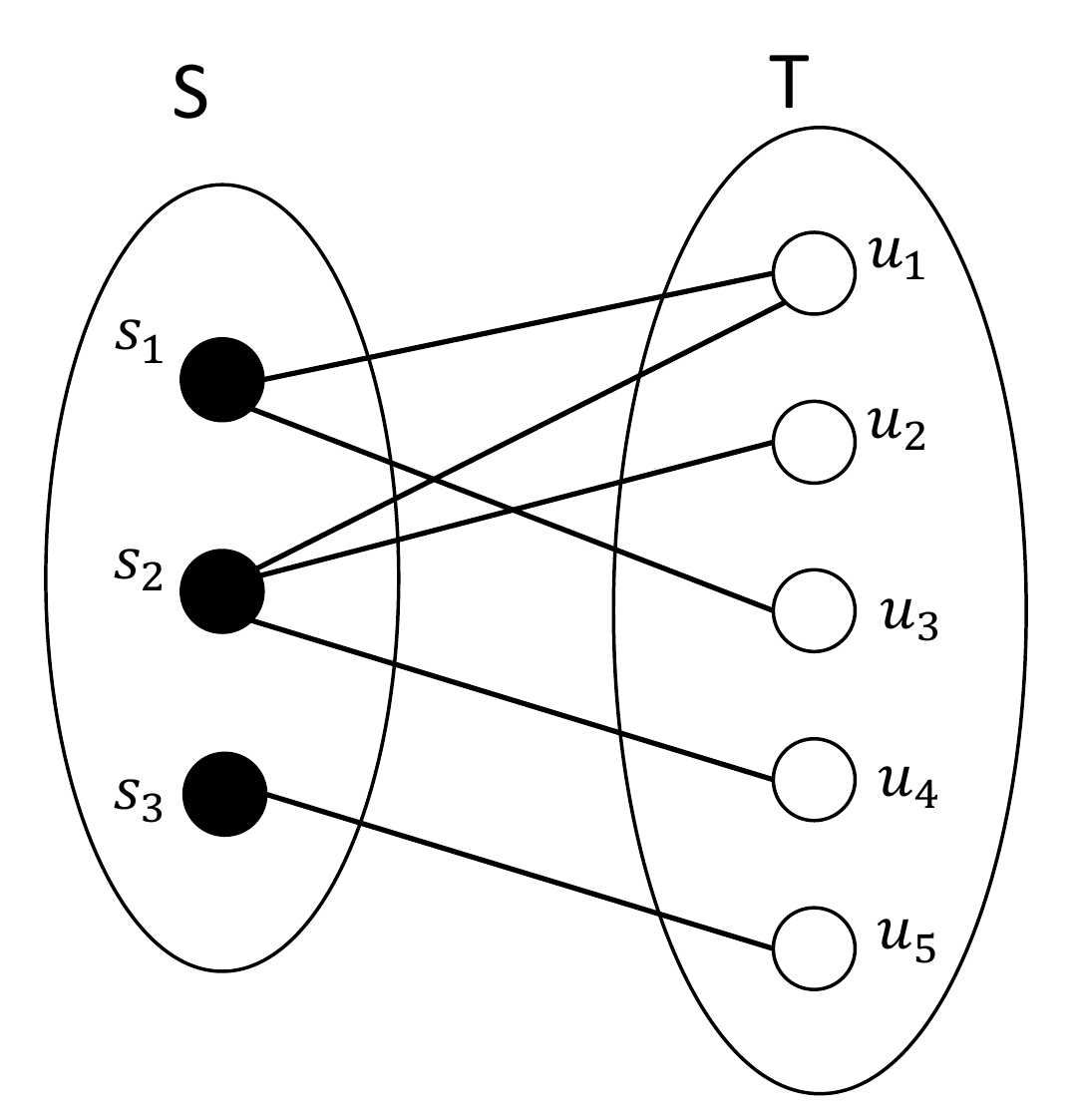}}
	\caption{A bipartite social network example}
	\label{Fig.1}
\end{figure}

A mini social network shown in figure \ref{Fig.1} is built as an example, there are three marketing channels and five potential customers, customer $u_1$ and $u_3$ are the followers of the marketing channel $s_1$,  customers $u_1$,  $u_2$  and $u_4$ are the followers of marketing channel $s_2$,  customer $u_5$ is a follower of marketing channel $s_3$. Suppose channel $s_1$ has a marketing capacity $c(s_1) =10$, which means the marketing influence trials can be sent are at most 10 times from $s_1$. Suppose $c(s_2) =20$, $c(s_3) =15$. For node $s_1$, $p_{s_1}^1$, $p_{s_1}^2$, $ \dots p_{s_1}^{10}$ are given. Similarly, we are given $p_{s_2}^1$, $p_{s_2}^2$, $ \dots p_{s_1}^{20}$ and $p_{s_3}^1$, $p_{s_3}^2$, $ \dots p_{s_3}^{15}$.

The goal is to find the optimal number of influence trials for each source node respecting the capacity of that node, and maximizes the overall profit. We call the solution a marketing strategy, which is denoted as the vector $\textbf{m}$,  thus $\textbf{m}(s)$ is the number of influencing trials at the source node $s$. Note that for one potential customer, he/she may be connected to more than one marketing channels. 

In the above example, if a marketing strategy $\textbf{m}= (2, 1, 0)$ is applied, node $u_1$ receives the influence three times, two times from the source node $s_1$, one time from $s_2$, the three events are independent.
The probability that the node $u_1$ is activated is $1-(1-p_{s_1}^1)\cdot  (1-p_{s_1}^2) \cdot (1-p_{s_2}^1)$. 

For the general problem, let $\Gamma(t)$ be the set of source nodes that are adjacent to node $t$. The probability that the node $t$ is activated is:
\begin{align}
p_t({\textbf{m}}) =1- \prod_{s \in \Gamma(t)} \prod_{i=1}^{\textbf{m}(s)} {(1-p^i_s)}
\end{align}

The optimization target is the expected overall profit, which should be the difference between the expected influenced nodes $\sigma({\textbf{m}})$ and the cost of marketing $\delta({\textbf{m}})$. We can obtain $\sigma({\textbf{m}})$ by summing $p_t({\textbf{m}}) $ for all $t\in T$,

\begin{align}\label{eq:2}
\sigma({\textbf{m}}) = \sum_{t \in T} p_t({\textbf{m}}) 
=\sum_{t \in T} {(1- \prod_{s \in \Gamma(t)} \prod_{i=1}^{\textbf{m}(s)} {(1-p^i_s)})}
\end{align}

The marketing cost $\delta({\textbf{m}})$ is proportional to the number of influencing trials, let $\delta_s$ be the unit cost for each trial at the marketing channel $s$.  Then the total cost of all marketing channels is

\begin{align}\label{eq:3}
\delta({\textbf{m}}) = \sum_{s \in S} \textbf{m}(s) \cdot \delta_s
\end{align}

Hence, the objective function is

\begin{align}\label{eq:4}
f({\textbf{m}}) = \sigma({\textbf{m}}) - \delta({\textbf{m}}) =
\sum_{t \in T} {(1- \prod_{s \in \Gamma(t)} \prod_{i=1}^{\textbf{m}(s)} {(1-p^i_s)})} -  \sum_{s \in S} \textbf{m}(s) \cdot \delta_s
\end{align}

\begin{definition}[Profit Maximization Problem(PM)]
	Given a social network $G=(S,T;E)$,  the marketing capacity vector $\textbf{c}$,  the influencing probabilities $p_s^1, p_s^2,\dots,p_s^{c(s)}$ for each node $s\in S$, and the unit marketing cost $\delta_s$,  find a marketing strategy $\textbf{m}$, which represents the number of influencing trials at the marketing channels, to maximize the expected profit through influence propagation:
	\begin{eqnarray}
	\max f(\textbf{m})\\
	s.t. \textbf{0}\leq\textbf{m}\leq \textbf{c}
	\end{eqnarray}
	
\end{definition}

Next we show the objective function of profit maximization problem is DR-submodular. We assume $p_e^i \leq p_e^j$, for any $e \in E$, and $i\geq j$, because an advertisement has less marketing effect at the $i$th time people watch it than at an earlier $j$th time.  

\begin{theorem}
	$f(\textbf{m})$ is non-monotone DR-submodular.
\end{theorem}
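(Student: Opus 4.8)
The plan is to exploit the additive structure $f = \sigma - \delta$. Since the cost $\delta(\textbf{m}) = \sum_{s \in S} \textbf{m}(s)\,\delta_s$ is linear in $\textbf{m}$, its marginal $\delta(\textbf{m}+\chi_s) - \delta(\textbf{m}) = \delta_s$ is a constant independent of $\textbf{m}$, so $-\delta$ is trivially DR-submodular (in fact DR-modular). Because the DR-submodular inequality is preserved under addition, it suffices to prove that the influence term $\sigma$ is DR-submodular; the cost only shifts every marginal by a fixed constant.

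First I would compute the single-coordinate marginal of $\sigma$. Writing $Q_t(\textbf{m}) = \prod_{s \in \Gamma(t)} \prod_{i=1}^{\textbf{m}(s)} (1-p_s^i)$ for the probability that target $t$ is \emph{not} activated under $\textbf{m}$, so that $\sigma(\textbf{m}) = \sum_{t \in T}\bigl(1 - Q_t(\textbf{m})\bigr)$, incrementing coordinate $s$ by one multiplies $Q_t$ by the single extra factor $(1-p_s^{\textbf{m}(s)+1})$ for exactly those targets $t$ with $s \in \Gamma(t)$, and leaves every other $Q_t$ unchanged. A short calculation then yields the clean product form
\[
\sigma(\textbf{m}+\chi_s) - \sigma(\textbf{m}) = p_s^{\textbf{m}(s)+1} \sum_{t : s \in \Gamma(t)} Q_t(\textbf{m}).
\]

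With this formula the DR-inequality for $\boldsymbol{x} \leq \boldsymbol{y}$ reduces to comparing the two factors. For the first factor, $\boldsymbol{x}(s)+1 \leq \boldsymbol{y}(s)+1$ together with the standing assumption that $p_s^i$ is non-increasing in the trial index $i$ gives $p_s^{\boldsymbol{x}(s)+1} \geq p_s^{\boldsymbol{y}(s)+1} \geq 0$. For the second factor, $\boldsymbol{x} \leq \boldsymbol{y}$ means $Q_t(\boldsymbol{y})$ is obtained from $Q_t(\boldsymbol{x})$ by multiplying in additional factors lying in $[0,1]$, hence $0 \leq Q_t(\boldsymbol{y}) \leq Q_t(\boldsymbol{x})$ for every $t$, and summation preserves this. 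Multiplying the two non-negative inequalities termwise gives $\sigma(\boldsymbol{x}+\chi_s)-\sigma(\boldsymbol{x}) \geq \sigma(\boldsymbol{y}+\chi_s)-\sigma(\boldsymbol{y})$, which establishes DR-submodularity of $\sigma$ and hence of $f$. The point to be careful about is that the argument genuinely requires both factors to be non-negative \emph{and} to move in the same (non-increasing) direction as one passes from $\boldsymbol{x}$ to $\boldsymbol{y}$; the monotonicity of $p_s^i$ in $i$ is precisely the hypothesis that makes the first factor cooperate, so the assumption $p_e^i \leq p_e^j$ for $i \geq j$ is essential rather than cosmetic, and I expect this to be the only genuinely delicate step.

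Finally, for non-monotonicity I would note that the marginal of the full objective is $f(\textbf{m}+\chi_s)-f(\textbf{m}) = p_s^{\textbf{m}(s)+1}\sum_{t:s\in\Gamma(t)}Q_t(\textbf{m}) - \delta_s$, whose influence part is at most $|\{t : s \in \Gamma(t)\}|$ because each $p_s^i \leq 1$ and each $Q_t(\textbf{m}) \leq 1$. Choosing any instance whose unit cost $\delta_s$ exceeds this bound forces the marginal to be strictly negative, so $f(\textbf{m}) > f(\textbf{m}+\chi_s)$ for the comparable strategies $\textbf{m} \leq \textbf{m}+\chi_s$, exhibiting a violation of monotonicity and completing the proof.
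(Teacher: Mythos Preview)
Your proof is correct and follows essentially the same route as the paper: decompose $f=\sigma-\delta$, observe that the linear cost $\delta$ has constant marginals so subtracting it preserves DR-submodularity (the paper's Lemma~2), and then verify DR-submodularity of $\sigma$ by computing the single-coordinate marginal as the product $p_s^{\textbf{m}(s)+1}\cdot Q_t(\textbf{m})$ and comparing the two factors at $\boldsymbol{x}\le\boldsymbol{y}$ (the paper's Lemma~1, done per target $t$ rather than summed). Your explicit construction for non-monotonicity is in fact more complete than the paper, which only asserts that the difference of a monotone DR-submodular and a monotone modular function ``may not be monotone.''
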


The proof of Theorem 1 relies upon two lemmas.  

\begin{lemma} \label{lemma1}
	$\sigma({\textbf{m}})$ is DR-submodular.
\end{lemma}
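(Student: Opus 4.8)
The plan is to exploit the additive structure of $\sigma$ together with the product form of each per-target activation probability. Since DR-submodularity is preserved under summation (and under adding constants), it suffices to prove that each term $p_t(\textbf{m}) = 1 - \prod_{s\in\Gamma(t)}\prod_{i=1}^{\textbf{m}(s)}(1-p_s^i)$ is DR-submodular; summing over $t\in T$ then yields the claim for $\sigma$. For a fixed target $t$, I would write $q_t(\textbf{m}) = \prod_{s\in\Gamma(t)}\prod_{i=1}^{\textbf{m}(s)}(1-p_s^i)$ for the probability that $t$ is \emph{not} activated, so that $p_t = 1 - q_t$.

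The first step is to compute the marginal gain of a single extra trial at a source node $s$. If $s\notin\Gamma(t)$ the marginal is zero and the DR inequality holds trivially, so assume $s\in\Gamma(t)$. Incrementing coordinate $s$ multiplies $q_t$ by exactly one new factor, namely $q_t(\textbf{m}+\chi_s) = q_t(\textbf{m})\,(1-p_s^{\textbf{m}(s)+1})$, which gives the clean identity $p_t(\textbf{m}+\chi_s) - p_t(\textbf{m}) = p_s^{\textbf{m}(s)+1}\,q_t(\textbf{m})$. This factorization of the marginal into the per-trial activation probability and the current survival probability is the crux of the argument.

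It then remains to show this marginal is nonincreasing along the lattice, i.e.\ that $p_s^{\textbf{x}(s)+1}\,q_t(\textbf{x}) \ge p_s^{\textbf{y}(s)+1}\,q_t(\textbf{y})$ whenever $\textbf{x}\le\textbf{y}$. I would establish this from two monotonicity facts, each making one factor at least as large at $\textbf{x}$ as at $\textbf{y}$. First, $\textbf{x}(s)\le\textbf{y}(s)$ together with the standing assumption $p_s^i\le p_s^j$ for $i\ge j$ gives $p_s^{\textbf{x}(s)+1}\ge p_s^{\textbf{y}(s)+1}$. Second, since $\textbf{x}\le\textbf{y}$ coordinatewise, the product $q_t(\textbf{y})$ contains every factor of $q_t(\textbf{x})$ plus additional factors $(1-p_s^i)\in[0,1]$, so $q_t(\textbf{x})\ge q_t(\textbf{y})$. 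As all quantities are nonnegative, the two products compare factor by factor and the desired inequality follows.

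The argument is mostly bookkeeping once the marginal has been factored, and I do not anticipate a genuine obstacle. The one point requiring care is the product manipulation that isolates the single new factor $(1-p_s^{\textbf{m}(s)+1})$ when incrementing coordinate $s$, and the accompanying need to keep every factor in $[0,1]$ so that both comparisons preserve sign. This is precisely where the assumption that the $p_s^i$ are nonincreasing in $i$ is used, so I would state that monotonicity explicitly before combining the two bounds.
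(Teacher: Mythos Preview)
Your proposal is correct and follows essentially the same route as the paper: reduce to a single target $t$ by linearity, factor the marginal gain $p_t(\textbf{m}+\chi_s)-p_t(\textbf{m})$ as the product of the survival probability $q_t(\textbf{m})$ and the per-trial probability $p_s^{\textbf{m}(s)+1}$, and then compare each factor separately using $\textbf{x}\le\textbf{y}$ and the standing assumption that $p_s^i$ is nonincreasing in $i$. The only differences are cosmetic: you explicitly separate out the trivial case $s\notin\Gamma(t)$ and introduce the notation $q_t$, whereas the paper leaves these implicit.
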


{\em Proof.}\quad  It suffices if $p_t({\textbf{m}})$ is DR-submodular, because $\sigma({\textbf{m}})$ is a linear summation of $p_t({\textbf{m}})$, $\forall t \in T$. Suppose ${\textbf{x}}\leq {\textbf{y}}$, ${\textbf{x}}, {\textbf{y}} \in\mathbb{Z}_+^S $. We have
\begin{align*}
p_t({\textbf{x}})= 1- \prod_{s \in \Gamma(t)} \prod_{i=1}^{\textbf{x}(s)} {(1-p_s^i)}
\end{align*}
\begin{align*}
p_t({\textbf{x}}+\chi_e)= 1- [\prod_{s \in \Gamma(t)} \prod_{i=1}^{\textbf{x}(s)} {(1-p^i_s)}]\cdot (1-p_e^{\textbf{x}(s)+1})
\end{align*}

Thus,

\begin{align*}
p_t(\chi_e|{\textbf{x}}) = p_t({\textbf{x}}+\chi_e) - p_t({\textbf{x}})=[\prod_{s \in \Gamma(t)} \prod_{i=1}^{\textbf{x}(s)} {(1-p_s^i)}]\cdot p_e^{\textbf{x}(e)+1}
\end{align*}
Similarly, we have
\begin{align*}
p_t(\chi_e|{\textbf{y}}) = p_t({\textbf{y}}+\chi_e) - p_t({\textbf{y}})=[\prod_{s \in \Gamma(t)} \prod_{i=1}^{\textbf{y}(s)} {(1-p_s^i)}]\cdot p_e^{\textbf{y}(e)+1}
\end{align*}

Since  ${\textbf{x}}\leq {\textbf{y}}$, so $\textbf{x}(s) \leq \textbf{y}(s)$, therefore
\begin{align*}
\prod_{s \in \Gamma(t)} \prod_{i=1}^{\textbf{x}(s)} {(1-p_s^i)} \geq \prod_{s \in \Gamma(t)} \prod_{i=1}^{\textbf{y}(s)} {(1-p_s^i)}
\end{align*}
Since  ${\textbf{x}}\leq {\textbf{y}}$, we have $\textbf{x}(e)+1 \leq \textbf{y}(e)+1$. So $p_e^{\textbf{x}(e)+1} \geq p_e^{\textbf{y}(e)+1}$. Hence,
\begin{align*}
p_t({\textbf{x}}+\chi_e) - p_t({\textbf{x}}) \geq  p_t({\textbf{y}}+\chi_e) - p_t({\textbf{y}})
\end{align*}

That means  $ p_t({\textbf{m}})$ is DR-submolar, and so does 	$\sigma({\textbf{m}})$.
\qed

\begin{lemma} \label{lemma2}
	If  function $h(\textbf{x})$ is DR-submodular, function $l(\textbf{x})$ is modular, then the function $f(\textbf{x}) = h(\textbf{x})-l(\textbf{x})$ is DR-submodular.
\end{lemma}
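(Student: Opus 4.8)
The plan is to verify the defining DR-submodular inequality for $f$ directly, by fixing an arbitrary pair $\textbf{x}\le\textbf{y}$ in $\mathbb{Z}_+^E$ and an arbitrary coordinate $e\in E$, and then comparing the two marginal gains $f(\chi_e|\textbf{x})$ and $f(\chi_e|\textbf{y})$. Since the marginal operator $f(\chi_e|\cdot)$ is additive with respect to the decomposition $f=h-l$, I would first record the identity
\begin{align*}
f(\chi_e|\textbf{x}) = [\,h(\textbf{x}+\chi_e)-h(\textbf{x})\,] - [\,l(\textbf{x}+\chi_e)-l(\textbf{x})\,],
\end{align*}
together with the analogous expression obtained by replacing $\textbf{x}$ with $\textbf{y}$. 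This reduces the claim to controlling the $h$-part and the $l$-part separately.

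Next I would handle the two pieces. The DR-submodularity of $h$, applied to $\textbf{x}\le\textbf{y}$ and the unit vector $\chi_e$, gives at once
\begin{align*}
h(\textbf{x}+\chi_e)-h(\textbf{x}) \ge h(\textbf{y}+\chi_e)-h(\textbf{y}).
\end{align*}
For the subtracted term, the key observation is that a modular function has a per-coordinate marginal gain that does not depend on the base point, so that
\begin{align*}
l(\textbf{x}+\chi_e)-l(\textbf{x}) = l(\textbf{y}+\chi_e)-l(\textbf{y}).
\end{align*}
Subtracting this common quantity from both sides of the $h$-inequality preserves the inequality and yields $f(\chi_e|\textbf{x}) \ge f(\chi_e|\textbf{y})$, which is exactly the DR-submodular condition for $f$. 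Since $\textbf{x}\le\textbf{y}$ and $e$ were arbitrary, the lemma follows.

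The only point requiring genuine care is justifying the constant-marginal property of $l$, as this is precisely where the modularity hypothesis is consumed. I would make it explicit by recalling that the cost term in the profit model, $l(\textbf{x})=\sum_{s}\textbf{x}(s)\,\delta_s$, is linear, whence $l(\textbf{x}+\chi_e)-l(\textbf{x})=\delta_e$ independently of $\textbf{x}$; more generally, a modular lattice function satisfying $l(\textbf{x})+l(\textbf{y})=l(\textbf{x}\lor\textbf{y})+l(\textbf{x}\land\textbf{y})$ has per-coordinate marginals that are independent of the base point, which is exactly the equality invoked above. Everything else is elementary bookkeeping on marginal gains, with no case analysis or estimation, so I do not anticipate any substantive obstacle beyond stating the modularity property in the exact form the argument needs.
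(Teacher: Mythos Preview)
Your proposal is correct and follows essentially the same approach as the paper's proof: both decompose the marginal $f(\textbf{x}+\chi_e)-f(\textbf{x})$ into its $h$-part and $l$-part, invoke DR-submodularity of $h$ for the inequality step, and use the constant-marginal property of the modular function $l$ to replace $l(\textbf{x}+\chi_e)-l(\textbf{x})$ by $l(\textbf{y}+\chi_e)-l(\textbf{y})$. Your version is in fact slightly more careful in justifying why modularity gives base-point-independent marginals, which the paper simply asserts.
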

{\em Proof.}\quad  
\begin{align}
f({\boldsymbol{x}}+\chi_e) - f({\boldsymbol{x}}) &= h({\boldsymbol{x}}+\chi_e) - l({\boldsymbol{x}}+\chi_e) -[h({\boldsymbol{x}}) - l({\boldsymbol{x}}) ] \\& = h({\boldsymbol{x}}+\chi_e) -h({\boldsymbol{x}}) - [ l({\boldsymbol{x}}+\chi_e)- l({\boldsymbol{x}})]\\&\geq h({\boldsymbol{y}}+\chi_e) - h({\boldsymbol{y}})-[l({\boldsymbol{x}}+\chi_e) - l({\boldsymbol{x}}) ]\\&=h({\boldsymbol{y}}+\chi_e) - h({\boldsymbol{y}})-[l({\boldsymbol{y}}+\chi_e) - l({\boldsymbol{y}}) ]\\& = f({\boldsymbol{y}}+\chi_e) - f({\boldsymbol{y}})
\end{align}

The inequality (10) holds because of $h$'s DR-submolarity, the equality (11) is due to  $l$'s modularity.
\qed

Based on the above two lemmas, we can easily conclude that the objective function $f({\textbf{m}})$ is DR-submolar since $\sigma({\textbf{m}}) $ is DR-submodular and $\delta({\textbf{m}})$ is modular. It is apparent that both the function  $\sigma({\textbf{m}}) $ and $\delta({\textbf{m}})$ are monotone non-decreasing,  the difference between the monotone submodular functions and the monotone modular function may not be monotone. 

For the computational complexity of the proposed problem,  we know that the NP-hard Max Cut problem is a special case of the unconstrained submodular set function maximization \cite{feige2011maximizing}, therefore the more general form lattice submodular function maximization is also NP-hard.

\section{Algorithm}
In this section, we present the algorithm for non-monotone DR-submodular function maximization. The main idea is inspired by the double greedy algorithm for the unconstrained submodular maximization(USM)\cite{buchbinder2015tight} on set functions. The algorithm can be extended to accommodate DR-submodular function over integer lattice\cite{soma2017non}, because DR-submodular function can be treated as submodular set function on each coordinate. We investigate some interesting properties for DR-submodular functions to further speed up the algorithm.

 We define two functions $\phi(b)  :=f(\boldsymbol{\chi}_e|\boldsymbol{x}+b\boldsymbol{\chi}_e)$ ,$\psi(b)  :=f(-\boldsymbol{\chi}_e|\boldsymbol{y}-b\boldsymbol{\chi}_e)$, where $b \in \mathbb{Z}^+ $. Both functions are non-increasing functions of $b$ because the function $f$ is DR-submodular.

\begin{algorithm}[htbp]
	\caption{Binary Search Greedy Algorithm}
	\textbf{Input: $f : \mathbb{Z}_+^E \to \mathbb{R}^+$, $\boldsymbol{B} \in \mathbb{Z}_+^E$} \\
	\textbf{Assumption:} \text{ $f$ is DR-submodular} \\
	
	\begin{algorithmic}[1]
		\STATE $\boldsymbol{x}\leftarrow \boldsymbol{0}$, $\boldsymbol{y}\leftarrow \boldsymbol{B}$;
		\FOR {$e\in E$}
		\STATE Find $\arg \min_b{\phi(b)}$ such that  $\phi(b)< 0 $ by binary search.  
        \STATE$u\leftarrow \boldsymbol{x}(e)+\arg \min_b\phi(b)-1$.
		\STATE Find $\arg \min_b{\psi(b)}$ such that  $\psi(b)< 0 $ by binary search. 
		\STATE$v\leftarrow \boldsymbol{y}(e)-\arg \min_b\psi(b)+1$.
		\WHILE{$\boldsymbol{x}(e) < \boldsymbol{y}(e)$}
 	    \STATE $\sigma \leftarrow \max(\lfloor \frac{\boldsymbol{y}(e)-\boldsymbol{x}(e)}{2}\rfloor,1) $		
 	    \STATE $\alpha \leftarrow f(\sigma \boldsymbol{\chi}_e |\boldsymbol{x})$ and $\beta \leftarrow f(-\sigma  \boldsymbol{\chi}_e  |\boldsymbol{y})$
		\IF{$\beta\leq 0$}
		\STATE $\boldsymbol{x}(e) \leftarrow \boldsymbol{x}(e) + \sigma$
		\ELSIF{$\alpha \leq 0$}
		\STATE $\boldsymbol{y}(e) \leftarrow \boldsymbol{y}(e) - \sigma$
		\ELSE
		\STATE  Randomly update $\boldsymbol{x}(e)\leftarrow \boldsymbol{x}(e)+\sigma$ or $\boldsymbol{y}(e)\leftarrow \boldsymbol{y}(e)-\sigma$; the former case occurs with probability $\frac{\alpha}{\alpha+\beta}$, the later case with the probability $\frac{\beta}{\alpha+\beta}$. 
		\ENDIF
		\ENDWHILE
		\IF{$\boldsymbol{x}(e)\geq u$}
		\STATE  $\boldsymbol{x}(e)\leftarrow u$, $\boldsymbol{y}(e)\leftarrow u$
			\ENDIF
		\IF{$\boldsymbol{y}(e)\leq v$}
		\STATE  $\boldsymbol{y}(e)\leftarrow v$, $\boldsymbol{x}(e)\leftarrow v$
			\ENDIF
		\ENDFOR
		\RETURN $\boldsymbol{x}$
	\end{algorithmic}
\end{algorithm}

The algorithm starts with two vectors,  $x = \boldsymbol{0}$ and $y = \boldsymbol{c}$. For each coordinate $e \in E$ it iteratively either increase ${\textbf{x}}(e)$ or decrease ${\textbf{y}}(e)$ by $\sigma$, which depends on the marginal gain by adding $\sigma$ to ${\textbf{x}}(e)$ and the marginal gain by removing $\sigma$ to ${\textbf{y}}(e)$. This procedure continues until ${\textbf{x}}(e) = {\textbf{y}}(e)$. Then it moves on to work on the next coordinate, after ${\textbf{x}}$ and ${\textbf{y}}$ agrees on all coordinates $e\in E$, ${\textbf{x}}={\textbf{y}}$, and the vector is the output of the algorithm. Different from applying the double greedy algorithm directly, which tightens the gap one unit per step, algorithm 1 tightens it by half in each iteration. The binary search nature of this algorithm guarantees that the number of iterations needed is a logarithm of $B$.  Next let us firstly give a few results based on the diminish return property, which will be used in proving the theoretical guarantee of the algorithm later.


\begin{lemma} \label{lemma1}
For $\forall \boldsymbol{x}\leq \boldsymbol{y}, k\geq 1, k\in \mathbb{Z}^+$, and the value of $k$ does not violate the integer bound. We have
\begin{align*}
f(\boldsymbol{x}+k\boldsymbol{\chi}_e)-f(\boldsymbol{x})\geq f(\boldsymbol{y}+k\boldsymbol{\chi}_e)-f(\boldsymbol{y}) 
\end{align*}
\end{lemma}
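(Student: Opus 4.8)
The plan is to reduce the multi-step marginal gain $f(\boldsymbol{x}+k\boldsymbol{\chi}_e)-f(\boldsymbol{x})$ to a telescoping sum of single-unit marginal gains, and then apply the defining DR-submodular inequality one term at a time. Concretely, I would first write
\begin{align*}
f(\boldsymbol{x}+k\boldsymbol{\chi}_e)-f(\boldsymbol{x}) = \sum_{j=0}^{k-1}\bigl[f(\boldsymbol{x}+(j+1)\boldsymbol{\chi}_e)-f(\boldsymbol{x}+j\boldsymbol{\chi}_e)\bigr],
\end{align*}
together with the analogous identity for $\boldsymbol{y}$. Each summand is a single-coordinate unit increment, which is exactly the object that the DR-submodular definition from the preliminaries constrains.

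The key observation is that for every $j$ the base points satisfy $\boldsymbol{x}+j\boldsymbol{\chi}_e \leq \boldsymbol{y}+j\boldsymbol{\chi}_e$, since adding the common vector $j\boldsymbol{\chi}_e$ to both sides of $\boldsymbol{x}\leq \boldsymbol{y}$ preserves the coordinate-wise order. Hence the DR-submodular property, applied at these two base points with the unit increment $\boldsymbol{\chi}_e$, yields for each $0 \leq j \leq k-1$
\begin{align*}
f(\boldsymbol{x}+(j+1)\boldsymbol{\chi}_e)-f(\boldsymbol{x}+j\boldsymbol{\chi}_e) \geq f(\boldsymbol{y}+(j+1)\boldsymbol{\chi}_e)-f(\boldsymbol{y}+j\boldsymbol{\chi}_e).
\end{align*}
Summing these $k$ inequalities over $j$ and re-telescoping both sides gives precisely $f(\boldsymbol{x}+k\boldsymbol{\chi}_e)-f(\boldsymbol{x}) \geq f(\boldsymbol{y}+k\boldsymbol{\chi}_e)-f(\boldsymbol{y})$, which is the claim. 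Equivalently, one could present this as a short induction on $k$: the base case $k=1$ is the DR-submodular definition itself, and the inductive step peels off one unit increment and combines the definition with the induction hypothesis.

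The argument is essentially routine once the telescoping is set up, so I do not expect a deep obstacle. The only points demanding care are bookkeeping ones: I must check that the hypothesis that $k$ does not violate the integer bound guarantees that every intermediate vector $\boldsymbol{x}+j\boldsymbol{\chi}_e$ and $\boldsymbol{y}+j\boldsymbol{\chi}_e$ for $0\leq j\leq k$ lies within the domain $[\boldsymbol{0},\boldsymbol{B}]$. This follows because $\boldsymbol{y}+k\boldsymbol{\chi}_e\leq \boldsymbol{B}$ and $\boldsymbol{x}\leq \boldsymbol{y}$, so that each invocation of the DR-submodular inequality is legitimate and the telescoping sums are well defined.
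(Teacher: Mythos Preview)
Your proposal is correct and matches the paper's own proof essentially line for line: the paper also writes down the $k$ unit-increment DR-submodular inequalities $f(\boldsymbol{x}+j\boldsymbol{\chi}_e)-f(\boldsymbol{x}+(j-1)\boldsymbol{\chi}_e)\geq f(\boldsymbol{y}+j\boldsymbol{\chi}_e)-f(\boldsymbol{y}+(j-1)\boldsymbol{\chi}_e)$ for $j=1,\dots,k$ and sums them. Your version is slightly more careful in justifying $\boldsymbol{x}+j\boldsymbol{\chi}_e\leq \boldsymbol{y}+j\boldsymbol{\chi}_e$ and the domain feasibility, but the argument is the same.
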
 

\begin{proof}
	By the definition of DR-submodular, we have
	\begin{eqnarray}
\begin{aligned}
f(\boldsymbol{x}+\boldsymbol{\chi}_e)-f(\boldsymbol{x})&\geq f(\boldsymbol{y}+\boldsymbol{\chi}_e)-f(\boldsymbol{y}) \\
f(\boldsymbol{x}+2\boldsymbol{\chi}_e)-f(\boldsymbol{x}+\boldsymbol{\chi}_e)&\geq f(\boldsymbol{y}+2\boldsymbol{\chi}_e)-f(\boldsymbol{y}+\boldsymbol{\chi}_e) \\
&\vdots\\
f(\boldsymbol{x}+k\boldsymbol{\chi}_e)-f(\boldsymbol{x} + (k-1)\boldsymbol{\chi}_e)&\geq f(\boldsymbol{y}+k\boldsymbol{\chi}_e)-f(\boldsymbol{y}+(k-1)\boldsymbol{\chi}_e) \nonumber
\end{aligned}
\end{eqnarray}
	Sum up the above inequalities, the lemma holds. \qed
\end{proof}

Note that based upon lemma \ref{lemma1}, we have
\begin{eqnarray}
\begin{aligned}
\alpha + \beta = f(\boldsymbol{x}+\sigma\boldsymbol{\chi}_e)-f(\boldsymbol{x})-(f(\boldsymbol{y}+\sigma\boldsymbol{\chi}_e)-f(\boldsymbol{y}))\geq 0
\end{aligned}
\end{eqnarray}

\begin{lemma} \label{lemma2}
	For $\forall \boldsymbol{x}\leq \boldsymbol{y}, k\geq 1, k\in \mathbb{Z}^+$, $k\leq \boldsymbol{x}(e)$, we have
	\begin{align*}
	f(\boldsymbol{x}-k\boldsymbol{\chi}_e)-f(\boldsymbol{x})\leq f(\boldsymbol{y}-k\boldsymbol{\chi}_e)-f(\boldsymbol{y}) 
	\end{align*}
\end{lemma}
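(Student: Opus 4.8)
The plan is to reduce this statement to the additive version established in the preceding lemma (the one asserting $f(\boldsymbol{x}+k\boldsymbol{\chi}_e)-f(\boldsymbol{x})\geq f(\boldsymbol{y}+k\boldsymbol{\chi}_e)-f(\boldsymbol{y})$ for $\boldsymbol{x}\leq\boldsymbol{y}$) by a change of variables together with a sign reversal. The intuition is that removing $k$ units from a coordinate is, up to sign, the same marginal quantity as adding $k$ units to the decremented vector, and the preceding lemma already controls how such additive marginals compare between a smaller and a larger base point.

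Concretely, I would first introduce the shifted vectors $\boldsymbol{x}' := \boldsymbol{x} - k\boldsymbol{\chi}_e$ and $\boldsymbol{y}' := \boldsymbol{y} - k\boldsymbol{\chi}_e$. The hypothesis $k \leq \boldsymbol{x}(e)$ guarantees $\boldsymbol{x}'(e) = \boldsymbol{x}(e) - k \geq 0$, and since $\boldsymbol{x} \leq \boldsymbol{y}$ forces $\boldsymbol{y}(e) \geq \boldsymbol{x}(e) \geq k$, we also get $\boldsymbol{y}'(e) \geq 0$; hence both $\boldsymbol{x}'$ and $\boldsymbol{y}'$ are legitimate points of $\mathbb{Z}_+^E$ within the integer bound. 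Subtracting the same vector $k\boldsymbol{\chi}_e$ from both arguments preserves the coordinate-wise order, so $\boldsymbol{x}' \leq \boldsymbol{y}'$.

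Next I would apply the preceding (additive) lemma to the pair $\boldsymbol{x}' \leq \boldsymbol{y}'$ with the same step size $k$, which yields
\begin{align*}
f(\boldsymbol{x}' + k\boldsymbol{\chi}_e) - f(\boldsymbol{x}') \geq f(\boldsymbol{y}' + k\boldsymbol{\chi}_e) - f(\boldsymbol{y}').
\end{align*}
Since $\boldsymbol{x}' + k\boldsymbol{\chi}_e = \boldsymbol{x}$ and $\boldsymbol{y}' + k\boldsymbol{\chi}_e = \boldsymbol{y}$, this is precisely $f(\boldsymbol{x}) - f(\boldsymbol{x} - k\boldsymbol{\chi}_e) \geq f(\boldsymbol{y}) - f(\boldsymbol{y} - k\boldsymbol{\chi}_e)$. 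Multiplying through by $-1$ reverses the inequality and gives exactly the claimed bound $f(\boldsymbol{x} - k\boldsymbol{\chi}_e) - f(\boldsymbol{x}) \leq f(\boldsymbol{y} - k\boldsymbol{\chi}_e) - f(\boldsymbol{y})$.

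There is no substantive obstacle here; the only points demanding care are verifying that the decremented vectors stay feasible (which is exactly what the hypothesis $k \leq \boldsymbol{x}(e)$ buys us) and tracking the sign flip correctly. As a self-contained alternative that avoids invoking the earlier lemma, I could instead telescope the single-step DR-submodular inequality $f(\boldsymbol{x} - (j-1)\boldsymbol{\chi}_e) - f(\boldsymbol{x} - j\boldsymbol{\chi}_e) \geq f(\boldsymbol{y} - (j-1)\boldsymbol{\chi}_e) - f(\boldsymbol{y} - j\boldsymbol{\chi}_e)$ over $j = 1, \dots, k$, mirroring the summation argument used to prove the additive lemma; the left and right telescopes then collapse to the same conclusion.
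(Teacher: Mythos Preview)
Your proposal is correct and matches the paper's approach: the paper simply states that this lemma ``can be easily obtained from'' the preceding additive lemma, and your change-of-variables argument (shifting by $-k\boldsymbol{\chi}_e$ and flipping the sign) is precisely the natural way to spell out that reduction. The feasibility check $k\le\boldsymbol{x}(e)$ you highlight is the one detail worth recording, and your alternative telescoping argument is also valid.
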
 
 Lemma \ref{lemma2} can be easily obtained from lemma 
 \ref{lemma1}.

%
%
%
%

\begin{lemma}\label{lemma3}
Given $p\leq q$, $p, q \in Z^+$, if $f(\boldsymbol{\chi}_e|\boldsymbol{x}+(q-1)\boldsymbol{\chi}_e)\geq 0$, then
	\begin{align*}
	0\leq f(p\boldsymbol{\chi}_e|\boldsymbol{x})\leq f(q\boldsymbol{\chi}_e|\boldsymbol{x})
	\end{align*}
\end{lemma}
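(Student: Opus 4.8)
The plan is to reduce both inequalities to statements about the single-unit marginal gains along the coordinate $e$, where DR-submodularity acts most transparently. To that end I would introduce the per-step gain
\begin{align*}
g(j) := f(\boldsymbol{\chi}_e \mid \boldsymbol{x}+(j-1)\boldsymbol{\chi}_e) = f(\boldsymbol{x}+j\boldsymbol{\chi}_e) - f(\boldsymbol{x}+(j-1)\boldsymbol{\chi}_e),
\end{align*}
so that the $q$-th gain $g(q)$ is exactly the quantity assumed nonnegative in the hypothesis. The whole argument then rests on observing that $g$ is non-increasing in $j$: applying the DR-submodular inequality with the pair $\boldsymbol{x}+(j-1)\boldsymbol{\chi}_e \le \boldsymbol{x}+j\boldsymbol{\chi}_e$ gives $g(j) \ge g(j+1)$ for every $j$. (Equivalently, this is Lemma~\ref{lemma1} specialized to $k=1$.)

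From monotonicity of $g$ together with the hypothesis $g(q)\ge 0$, I would conclude the key fact that $g(1)\ge g(2)\ge\cdots\ge g(q)\ge 0$, i.e. \emph{every} single-step gain up through index $q$ is nonnegative. Next I would telescope the cumulative marginals: since the definition $f(\boldsymbol{z}\mid\boldsymbol{y})=f(\boldsymbol{y}+\boldsymbol{z})-f(\boldsymbol{y})$ yields
\begin{align*}
f(p\boldsymbol{\chi}_e\mid\boldsymbol{x}) = \sum_{j=1}^{p} g(j), \qquad f(q\boldsymbol{\chi}_e\mid\boldsymbol{x}) = \sum_{j=1}^{q} g(j),
\end{align*}
the left inequality $0\le f(p\boldsymbol{\chi}_e\mid\boldsymbol{x})$ follows because it is a sum of the nonnegative terms $g(1),\dots,g(p)$ (here $p\le q$ guarantees each such term is among those shown to be nonnegative). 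The right inequality follows by writing $f(q\boldsymbol{\chi}_e\mid\boldsymbol{x}) = f(p\boldsymbol{\chi}_e\mid\boldsymbol{x}) + \sum_{j=p+1}^{q} g(j)$ and noting the trailing sum is nonnegative, again because all indices from $p+1$ to $q$ lie below $q$.

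There is no genuine obstacle here; the proof is essentially a monotonicity-of-marginals bookkeeping argument. The only point requiring care is the logical direction: DR-submodularity makes the step-gains \emph{non-increasing}, so assuming the \emph{last} relevant gain $g(q)$ is nonnegative is what forces all \emph{earlier} gains to be nonnegative too. This backward propagation is precisely what makes both the sign statement and the ordering statement hold simultaneously, and it is why the hypothesis is phrased in terms of the marginal at $\boldsymbol{x}+(q-1)\boldsymbol{\chi}_e$ rather than at $\boldsymbol{x}$ itself.
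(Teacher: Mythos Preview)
Your proof is correct and follows essentially the same approach as the paper: telescope $f(p\boldsymbol{\chi}_e\mid\boldsymbol{x})$ and $f(q\boldsymbol{\chi}_e\mid\boldsymbol{x})$ into sums of single-step marginals, use DR-submodularity to see these marginals are non-increasing, and then use the hypothesis $g(q)\ge 0$ to conclude all of $g(1),\dots,g(q)$ are nonnegative. The paper's proof is terser but structurally identical.
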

\begin{proof}
	\begin{eqnarray}
	\begin{aligned}
	f(p\boldsymbol{\chi}_e|\boldsymbol{x}) = f(\boldsymbol{\chi}_e|\boldsymbol{x})+f(\boldsymbol{\chi}_e|\boldsymbol{x}+\boldsymbol{\chi}_e)+\dots + f(\boldsymbol{\chi}_e|\boldsymbol{x}+(p-1)\boldsymbol{\chi}_e)\\
		f(q\boldsymbol{\chi}_e|\boldsymbol{x}) = f(\boldsymbol{\chi}_e|\boldsymbol{x})+f(\boldsymbol{\chi}_e|\boldsymbol{x}+\boldsymbol{\chi}_e)+\dots + f(\boldsymbol{\chi}_e|\boldsymbol{x}+(q-1)\boldsymbol{\chi}_e)\nonumber
	\end{aligned}
	\end{eqnarray}

	Since the values of the terms in the above equations are non-increasing from left to right, $f(\boldsymbol{\chi}_e|\boldsymbol{x}+(q-1)\boldsymbol{\chi}_e)\geq 0$, so all terms are greater than or equal to 0. And $f(q\boldsymbol{\chi}_e|\boldsymbol{x}) $ has more terms, so the lemma holds.\qed
	\end{proof}

We can obtain a similar property in terms of the vector $\boldsymbol{y}$.
\begin{lemma}\label{lemma4}
	Given $p\leq q$, $p, q \in Z^+$, if $f(-\boldsymbol{\chi}_e|\boldsymbol{y}-(q-1)\boldsymbol{\chi}_e)\geq 0$, then
	\begin{align*}
	0\leq f(-p\boldsymbol{\chi}_e|\boldsymbol{y})\leq f(-q\boldsymbol{\chi}_e|\boldsymbol{y})
	\end{align*}
\end{lemma}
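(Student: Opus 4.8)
The plan is to mirror the proof of Lemma~\ref{lemma3}, working in the decreasing direction along coordinate $e$ rather than the increasing one. First I would unfold each multi-step decrement into a telescoping sum of the single-step decrements already named $\psi$. Using $f(-p\boldsymbol{\chi}_e|\boldsymbol{y}) = f(\boldsymbol{y}-p\boldsymbol{\chi}_e)-f(\boldsymbol{y})$ and inserting the intermediate points $\boldsymbol{y}-j\boldsymbol{\chi}_e$, I obtain
\begin{align*}
f(-p\boldsymbol{\chi}_e|\boldsymbol{y}) &= f(-\boldsymbol{\chi}_e|\boldsymbol{y}) + f(-\boldsymbol{\chi}_e|\boldsymbol{y}-\boldsymbol{\chi}_e) + \dots + f(-\boldsymbol{\chi}_e|\boldsymbol{y}-(p-1)\boldsymbol{\chi}_e), \\
f(-q\boldsymbol{\chi}_e|\boldsymbol{y}) &= f(-\boldsymbol{\chi}_e|\boldsymbol{y}) + f(-\boldsymbol{\chi}_e|\boldsymbol{y}-\boldsymbol{\chi}_e) + \dots + f(-\boldsymbol{\chi}_e|\boldsymbol{y}-(q-1)\boldsymbol{\chi}_e),
\end{align*}
so that the two expansions share a common prefix and the $q$-sum merely carries $q-p$ additional terms at the end.

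The second step is to control the signs and the ordering of the summands. Here I would invoke the fact, recorded just before Algorithm~1, that $\psi(b):=f(-\boldsymbol{\chi}_e|\boldsymbol{y}-b\boldsymbol{\chi}_e)$ is non-increasing in $b$ by DR-submodularity; equivalently, within each expansion the terms are non-increasing as the index grows, i.e.\ from left to right. Consequently the final term of the $q$-sum, namely $f(-\boldsymbol{\chi}_e|\boldsymbol{y}-(q-1)\boldsymbol{\chi}_e)$, is the smallest term occurring in either sum. The hypothesis asserts precisely that this smallest term is $\geq 0$, so every summand appearing in both expansions is nonnegative.

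From the nonnegativity of every summand the conclusion is immediate: both partial sums are $\geq 0$, and since $p\leq q$ the $q$-sum contains all terms of the $p$-sum plus extra nonnegative terms, yielding $0\leq f(-p\boldsymbol{\chi}_e|\boldsymbol{y})\leq f(-q\boldsymbol{\chi}_e|\boldsymbol{y})$. I do not anticipate a genuine obstacle, as this is the exact dual of Lemma~\ref{lemma3}; the only point demanding care is the direction of monotonicity of $\psi$. One must confirm that pushing the evaluation point down by $\boldsymbol{\chi}_e$ (increasing $b$) weakly decreases the single-step decrement --- this is exactly where DR-submodularity is used, and it is what guarantees that the hypothesis $\psi(q-1)\geq 0$ bounds all earlier terms from below.
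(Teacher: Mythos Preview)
Your proposal is correct and follows exactly the approach the paper intends: the paper does not spell out a proof of Lemma~\ref{lemma4} at all but simply remarks that it is the analogue of Lemma~\ref{lemma3} for the vector $\boldsymbol{y}$, and your argument is precisely that mirror-image telescoping using the non-increasing property of $\psi(b)$ already recorded before Algorithm~1. There is nothing to add.
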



The rest of this section is devoted to proving that Algorithm 1 provides an approximation ratio of $1/2$ for DR-submodular maximization. Let us begin the analysis of Algorithm 1 with the introduction of some notation. Let $\boldsymbol{x}_i^e$ and $\boldsymbol{y}_i^e$ be  random variables denoting the vectors  generated by the algorithm at the end of the i-th iteration for coordinate $e$, let the number of iterations for coordinate $e$ is $\theta_e$, note that $1\leq i\leq \theta_e\leq \log B$. Denote by $\boldsymbol{opt}$ the optimal solution. Let us define the following random variable: $\boldsymbol{opt}_i^e \triangleq(\boldsymbol{opt}\lor \boldsymbol{x}_i^e )\land \boldsymbol{y}_i^e$. Note that $\boldsymbol{x}_0^{e}(e)=0$,  $\boldsymbol{y}_{0}^{e}(e)=B$, and $\boldsymbol{opt}_0^{e} (e)=\boldsymbol{opt}(e)$. Additionally, the following always holds: $\boldsymbol{opt}_{\theta_e}^e(e) = \boldsymbol{x}_{\theta_e}^e(e)  = \boldsymbol{y}_{\theta_e}^e(e) $, $\forall e \in E$. 

Let us analyze the approximation ratio of the randomized algorithm. We consider the subsequence $\mathbb{E}[f(\boldsymbol{opt}_0^e)],\dots,\mathbb{E}[f(\boldsymbol{opt}_{\theta_e}^e)]$ for any dimension $e \in E$, and a whole sequence which is a combination of every such subsequence for each element $e \in E$. This sequence starts with $f(\boldsymbol{opt})$ and ends with the expected value of the algorithm’s output. The following lemma upper bounds the loss between every two consecutive elements in the sequence. Formally, $\mathbb{E}[f(\boldsymbol{opt}_{i-1}^e)-f(\boldsymbol{opt}_{i}^e)]$ is upper bounded by the average expected change in the value of the two solutions maintained by the algorithm, i.e., $\frac{1}{2} \mathbb{E} [f(\boldsymbol{x}_i^e) -f(\boldsymbol{x}_{i-1}^e)  +f(\boldsymbol{y}_i^e) -f(\boldsymbol{y}_{i-1}^e)]$. 

\begin{lemma}\label{lemma5}
	For every $1\leq i \leq \theta_e$,
	\begin{eqnarray}\label{eq1}
	\mathbb{E}[f(\boldsymbol{opt}_{i-1}^e)-f(\boldsymbol{opt}_{i}^e)]\leq \frac{1}{2} \mathbb{E} [f(\boldsymbol{x}_i^e) -f(\boldsymbol{x}_{i-1}^e)  +f(\boldsymbol{y}_i^e) -f(\boldsymbol{y}_{i-1}^e)]
	\end{eqnarray}
	where expectations are taken over the random choices of the algorithm.
\end{lemma}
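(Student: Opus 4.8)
The plan is to prove Lemma~\ref{lemma5} by a case analysis on which of the three branches of the inner loop the algorithm takes at iteration $i$ of coordinate $e$, mirroring the analysis of the randomized double greedy of Buchbinder et al.\ but adapted to the fact that each step here moves by $\sigma$ units rather than a single unit. First I would fix the coordinate $e$ and iteration $i$, and recall the key quantities: $\alpha = f(\sigma\boldsymbol{\chi}_e\mid\boldsymbol{x}_{i-1}^e)$ and $\beta = f(-\sigma\boldsymbol{\chi}_e\mid\boldsymbol{y}_{i-1}^e)$, where $\sigma$ is the step size chosen at this iteration. The crucial structural fact, established in the displayed inequality following Lemma~\ref{lemma1}, is that $\alpha+\beta\geq 0$; this is what legitimizes the probabilities $\frac{\alpha}{\alpha+\beta}$ and $\frac{\beta}{\alpha+\beta}$ in the randomized branch and drives the whole argument.

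Next I would carefully track how $\boldsymbol{opt}_i^e$ relates to $\boldsymbol{opt}_{i-1}^e$. Since only the $e$-th coordinate changes between iterations and $\boldsymbol{opt}_i^e = (\boldsymbol{opt}\lor\boldsymbol{x}_i^e)\land\boldsymbol{y}_i^e$, the effect of increasing $\boldsymbol{x}^e(e)$ by $\sigma$ (the $\alpha$-branch) is to possibly raise $\boldsymbol{opt}_i^e(e)$ up to the new $\boldsymbol{x}^e(e)$ if it was below it, while decreasing $\boldsymbol{y}^e(e)$ by $\sigma$ (the $\beta$-branch) can only lower $\boldsymbol{opt}_i^e(e)$ down to the new $\boldsymbol{y}^e(e)$. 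In each branch I would bound the loss $f(\boldsymbol{opt}_{i-1}^e)-f(\boldsymbol{opt}_i^e)$ by a marginal of $f$ evaluated at the appropriate endpoint, and then invoke DR-submodularity together with Lemmas~\ref{lemma1}--\ref{lemma4} to compare that marginal against $\alpha$ or $\beta$. The deterministic branches ($\beta\leq 0$ or $\alpha\leq 0$) should give the inequality directly, because there the solution gain $f(\boldsymbol{x}_i^e)-f(\boldsymbol{x}_{i-1}^e)=\alpha$ (resp.\ the $\boldsymbol{y}$ gain $=\beta$) is nonnegative and dominates the opt-loss, while the other solution does not move.

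For the randomized branch (both $\alpha,\beta>0$) I would compute the conditional expectation of both sides. The right-hand side becomes $\frac{1}{2}\bigl(\frac{\alpha}{\alpha+\beta}\alpha+\frac{\beta}{\alpha+\beta}\beta\bigr)=\frac{1}{2}\cdot\frac{\alpha^2+\beta^2}{\alpha+\beta}$, and the expected opt-loss is a convex combination, with weights $\frac{\alpha}{\alpha+\beta}$ and $\frac{\beta}{\alpha+\beta}$, of the two branch-wise opt-losses; after bounding each branch-wise loss by $\beta$ (in the $\boldsymbol{x}$-move case) and by $\alpha$ (in the $\boldsymbol{y}$-move case) respectively, the expected loss is at most $\frac{2\alpha\beta}{\alpha+\beta}$. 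The lemma then reduces to the elementary inequality $\frac{2\alpha\beta}{\alpha+\beta}\leq\frac{1}{2}\cdot\frac{\alpha^2+\beta^2}{\alpha+\beta}$, i.e.\ $(\alpha-\beta)^2\geq 0$, which closes the case.

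I expect the main obstacle to be the bookkeeping that the step size $\sigma$ introduces, rather than any deep new idea. The original double-greedy analysis moves one unit at a time, so the opt-loss in a move is literally a single marginal; here a move changes the coordinate by $\sigma$, so I must show that the opt-loss over a block of $\sigma$ units is still controlled by $\alpha$ or $\beta$ (the block marginals) and not merely by the single-unit marginals. This is exactly where Lemma~\ref{lemma3} and Lemma~\ref{lemma4} are needed: they guarantee that under the sign conditions that hold inside the randomized branch, the aggregated block marginals $f(\sigma\boldsymbol{\chi}_e\mid\cdot)$ and $f(-\sigma\boldsymbol{\chi}_e\mid\cdot)$ behave monotonically and nonnegatively, so the telescoped opt-loss across the $\sigma$ intermediate lattice points does not exceed the relevant block quantity. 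Handling the boundary case where $\boldsymbol{opt}_{i-1}^e(e)$ falls strictly inside the moved block $[\boldsymbol{x}_{i-1}^e(e),\boldsymbol{x}_{i-1}^e(e)+\sigma]$ (so that part of the block lies below $\boldsymbol{opt}(e)$ and part above) will require splitting the block at $\boldsymbol{opt}(e)$ and applying DR-submodularity to each piece; this is the most delicate part of the write-up.
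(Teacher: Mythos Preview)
Your overall case split and the treatment of the deterministic branches are in line with the paper's proof, but there is a genuine gap in your randomized branch that makes the argument fail as written.

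In Case 3 (both $\alpha,\beta>0$) you bound the branch-wise opt-loss by $\beta$ when $\boldsymbol{x}$ moves and by $\alpha$ when $\boldsymbol{y}$ moves, and conclude that the expected opt-loss is at most $\tfrac{\alpha}{\alpha+\beta}\beta+\tfrac{\beta}{\alpha+\beta}\alpha=\tfrac{2\alpha\beta}{\alpha+\beta}$. You then claim the lemma reduces to $\tfrac{2\alpha\beta}{\alpha+\beta}\le\tfrac12\cdot\tfrac{\alpha^2+\beta^2}{\alpha+\beta}$, ``i.e.\ $(\alpha-\beta)^2\ge 0$''. This last step is an algebraic slip: the displayed inequality is equivalent to $4\alpha\beta\le\alpha^2+\beta^2$, not $2\alpha\beta\le\alpha^2+\beta^2$, and it is \emph{false} in general (take $\alpha=\beta=1$). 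So the bound $\tfrac{2\alpha\beta}{\alpha+\beta}$ is simply too weak to close the case.

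What you are missing is the observation that, because the step size is $\sigma_i=\max(\lfloor(\boldsymbol{y}_{i-1}(e)-\boldsymbol{x}_{i-1}(e))/2\rfloor,1)$, one always has $\boldsymbol{x}_{i-1}(e)+\sigma_i\le \boldsymbol{y}_{i-1}(e)-\sigma_i$. Hence $\boldsymbol{opt}_{i-1}(e)\in[\boldsymbol{x}_{i-1}(e),\boldsymbol{y}_{i-1}(e)]$ lies on at most one side of the midpoint: either $\boldsymbol{opt}_{i-1}(e)\ge \boldsymbol{x}_{i-1}(e)+\sigma_i$, in which case the $\boldsymbol{x}$-move leaves $\boldsymbol{opt}_i=\boldsymbol{opt}_{i-1}$ and that branch loss is $0$; or $\boldsymbol{opt}_{i-1}(e)\le \boldsymbol{y}_{i-1}(e)-\sigma_i$, in which case the $\boldsymbol{y}$-move leaves $\boldsymbol{opt}_i=\boldsymbol{opt}_{i-1}$ and that branch loss is $0$. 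Thus only one of the two terms in the expected opt-loss is ever nonzero, and the correct bound is $\tfrac{\alpha\beta}{\alpha+\beta}$, not $\tfrac{2\alpha\beta}{\alpha+\beta}$. With this in hand the reduction to $(\alpha-\beta)^2\ge 0$ is valid, and the proof goes through exactly as in the paper (which bounds the expected loss by $\tfrac{\alpha_i\beta_i}{\alpha_i+\beta_i}$ in its inequality~(\ref{eq3}) for precisely this reason). The subcases 3.1--3.3 in the paper then handle the nonzero branch, including the ``opt falls inside the moved block'' situation you anticipated, by bounding that single nonzero loss by $\alpha_i$ (resp.\ $\beta_i$).
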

\begin{proof}
	Notice that it suffices to prove the inequality conditioned on any event of the form $\boldsymbol{x}_{i-1}^e=\boldsymbol{s}_{i-1}^e$, where $\boldsymbol{s}_{i-1}^e \in \mathbb{Z_+^E}$, for which the probability that $\boldsymbol{x}_{i-1}^e = \boldsymbol{s}_{i-1}^e$ is nonzero. Hence, fix such an event corresponding to an integer vector $\boldsymbol{s}_{i-1}^e$. The rest of the proof implicitly assumes everything is conditioned on this event. Since the analysis is same for every coordinate, we omit the superscript $e$ in $x_i^e, y_i^e, s_i^e$ and $ opt_i^e$ in the following proof. After an iteration $i$ on coordinate $e$, denote by $\delta_i$ the distance between $x_i(e)$ and $y_i(e)$, which can be calculate as $\delta_i=y_i(e)-x_i(e)=B-\sum_{ k=1}^i\sigma_{k}$. The parameter $\sigma_i$ can be obtained iteratively as $\sigma_1=\lfloor \frac{B}{2}\rfloor$, $\sigma_i=\lfloor \frac{B-\sum_{k=1}^{i-1}\sigma_k}{2}\rfloor=\lfloor\frac{\delta_i}{2}\rfloor$. Due to the conditioning, the following random variables become constants:
	
	\begin{enumerate}
		\item	 $\boldsymbol{y}_{i-1}$,  where $\boldsymbol{y}_{i-1}(e) = \boldsymbol{s}_{i-1}(e)+\delta_{i-1}$
		\item	 $\boldsymbol{opt}_{i-1}\triangleq (\boldsymbol{opt} \lor \boldsymbol{x}_{i-1}) \land \boldsymbol{y}_{i-1}$, where $\boldsymbol{opt}_{i-1}(e)= \boldsymbol{s}_{i-1}(e)+\min (\boldsymbol{opt}(e)-\boldsymbol{s}_{i-1}(e), \delta_{i-1})$
		\item	 $\alpha_i$ and $\beta_i$, which refer to $\alpha, \beta$ at the iteration $i$ .
	\end{enumerate}

	By Lemma \ref{lemma1}, $\alpha_i+\beta_i\geq 0$. Thus at most one of $\alpha_i$ and $\beta_i$ is strictly less than zero. We need to consider the following three cases for the value of $\alpha_i$ and $\beta_i$:
	
	\textbf{Case 1}: ($\alpha_i \geq 0 $ and $\beta_i\leq0$). In this case  the vector $\boldsymbol{y}$ does not change: $\boldsymbol{y}_i= \boldsymbol{y}_{i-1}$. The vector $\boldsymbol{x}$ changes. $\boldsymbol{x}_i \leftarrow \boldsymbol{x}_{i-1}+ \sigma_i\boldsymbol{\chi}_e$. Hence, $f(\boldsymbol{y}_i)-f(\boldsymbol{y}_{i-1}) = 0$. Also, by our definition $\boldsymbol{opt}_{i}\triangleq (\boldsymbol{opt} \lor \boldsymbol{x}_{i}) \land \boldsymbol{y}_{i}=(\boldsymbol{opt} \lor (\boldsymbol{x}_{i-1}+\sigma_i  \boldsymbol{\chi}_e)) \land \boldsymbol{y}_{i}$. Thus, we are left to prove that
	\begin{eqnarray}
	f((\boldsymbol{opt} \lor \boldsymbol{x}_{i-1}) \land \boldsymbol{y}_{i-1})-f((\boldsymbol{opt} \lor (\boldsymbol{x}_{i-1}+\sigma_i  \chi_e)) \land \boldsymbol{y}_{i})\nonumber\\\leq  \frac{1}{2} [f(\boldsymbol{x}_i) -f(\boldsymbol{x}_{i-1})]= \frac{\alpha_i}{2}
	\end{eqnarray}
	We prove it  considering the relationship among $\boldsymbol{x}_{i-1}(e)$, $\boldsymbol{x}_i(e)$ and $\boldsymbol{opt}_i(e)$.
	
	\textbf{Case 1.1}:
	$\boldsymbol{x}_i(e)=\boldsymbol{s}_{i-1}
	(e)+\sigma_i \leq \boldsymbol{opt}(e)$.
	
	This condition implies $\boldsymbol{x}_{i-1}
	(e)\leq \boldsymbol{opt}(e)$. Since $\boldsymbol{y}_i= \boldsymbol{y}_{i-1}$, the left-hand side of the inequality (15) is 0, which is definitely not greater than the nonnegative $\frac{\alpha_i}{2}$.
	
	\textbf{Case 1.2}: $\boldsymbol{s}_{i-1}(e)\geq  \boldsymbol{opt}(e)$.
	
	This condition implies that  $\boldsymbol{x}_i(e)=\boldsymbol{s}_{i-1}(e)+\sigma_i > \boldsymbol{opt}(e)$. 

We can see that $(\boldsymbol{opt} \lor (\boldsymbol{x}_{i-1}+\sigma_i \boldsymbol{\chi}_e)) \land \boldsymbol{y}_{i}=(\boldsymbol{opt} \lor \boldsymbol{x}_{i-1}) \land \boldsymbol{y}_{i-1}+\sigma_i  \boldsymbol{\chi}_e=\boldsymbol{opt}_{i-1}+\sigma_i \boldsymbol{\chi}_e\leq  \boldsymbol{y}_{i-1}$, by diminish return submodularity we have
	\begin{eqnarray}
	\begin{aligned}
	f((\boldsymbol{opt} \lor \boldsymbol{x}_{i-1}) \land \boldsymbol{y}_{i-1})-f((\boldsymbol{opt} \lor (\boldsymbol{x}_{i-1}+\sigma_i \boldsymbol{\chi}_e)) \land \boldsymbol{y}_{i})\\=f(-\sigma_i  \boldsymbol{\chi}_e|\boldsymbol{opt}_{i-1}+\sigma_i \boldsymbol{\chi}_e)\leq f(-\sigma_i \boldsymbol{\chi}_e|\boldsymbol{y}_{i-1})=\beta\leq 0\leq \frac{\alpha_i}{2} \nonumber
	\end{aligned}
	\end{eqnarray}
	
	\textbf{Case 1.3}:
	$\boldsymbol{s}_{i-1}(e)\leq  \boldsymbol{opt}(e)$ and $\boldsymbol{x}_i(e)=\boldsymbol{s}_{i-1}(e)+\sigma_i > \boldsymbol{opt}(e)$. 
	
	Then we have 
	\begin{eqnarray}
	((\boldsymbol{opt} \lor \boldsymbol{x}_{i-1})\land \boldsymbol{y}_{i-1})-(\boldsymbol{opt} \lor (\boldsymbol{x}_{i-1}+\sigma_i \boldsymbol{\chi}_e)\land \boldsymbol{y}_{i})=-(\boldsymbol{s}_{i-1}(e)+\sigma_i-\boldsymbol{opt}(e))\boldsymbol{\chi}_e \nonumber
	\end{eqnarray}
	Let $\boldsymbol{s}_{i-1}(e)+\sigma_i -\boldsymbol{opt}(e)=\delta$,  then
	\begin{eqnarray}
	((\boldsymbol{opt} \lor \boldsymbol{x}_{i-1})\land \boldsymbol{y}_{i-1})-(\boldsymbol{opt} \lor (\boldsymbol{x}_{i-1}+\sigma_i \boldsymbol{\chi}_e)\land \boldsymbol{y}_{i})= -\delta  \boldsymbol{\chi}_e.\nonumber
	\end{eqnarray}
   And 
	\begin{eqnarray}
	\begin{aligned}
	&f((\boldsymbol{opt} \lor \boldsymbol{x}_{i-1}) \land \boldsymbol{y}_{i-1})-f((opt \lor (\boldsymbol{x}_{i-1}+\sigma_i  \boldsymbol{\chi}_e)) \land \boldsymbol{y}_{i})\\=&f(-\delta  \boldsymbol{\chi}_e|(\boldsymbol{opt} \lor (\boldsymbol{x}_{i-1}+\sigma_i  \chi_e))\land \boldsymbol{y}_{i})\\=&f(-\delta \boldsymbol{\chi}_e|\boldsymbol{opt}_i).
	\end{aligned}
	\end{eqnarray}
	 By the definition of the random variable $\boldsymbol{opt}_i$, we have
	\begin{eqnarray}
	\boldsymbol{x}_i\leq \boldsymbol{opt}_i\leq \boldsymbol{y}_i\nonumber
	\end{eqnarray}
Note that  $0\leq \delta < \sigma_i$. Since $\psi(b)  :=f(-\boldsymbol{\chi}_e|\boldsymbol{y}-b\boldsymbol{\chi}_e)$ is a non-increasing function on b. $ f(-\sigma_i \boldsymbol{\chi}_e|\boldsymbol{y}_i)\leq 0$ implies $f(-\delta \boldsymbol{\chi}_e|\boldsymbol{y}_i-\sigma_i \boldsymbol{\chi}_e)\leq 0$. We note that in the $i$the iteration,  $\boldsymbol{y}_i= \boldsymbol{y}_{i-1}$ and due to the condition of case 1.3, we have $\boldsymbol{y}_i-\sigma_i \boldsymbol{\chi}_e\geq \boldsymbol{opt}_i$, thus  $f(-\delta  \boldsymbol{\chi}_e|\boldsymbol{opt}_i)\leq f(-\delta \boldsymbol{\chi}_e|\boldsymbol{y}_i-\sigma_i \boldsymbol{\chi}_e)\leq 0\leq  \frac{\alpha_i}{2}$.
	
	\textbf{Case 2}: ($\alpha_i < 0 $ and $\beta_i>0$).This case is analogous to the previous one, and therefore we omit its proof.
	
	\textbf{Case 3}: ($\alpha_i \geq0 $ and $\beta_i>0$). With probability $\frac{\alpha_i}{\alpha_i+\beta_i}$ the following events happen: $\boldsymbol{x}_i \leftarrow \boldsymbol{x}_{i-1} + \sigma_i \boldsymbol{\chi}_e$ and $\boldsymbol{y}_i \leftarrow \boldsymbol{y}_{i-1}$; while with probability $\frac{\beta_i}{\alpha_i+\beta_i}$  the following events happen: $x_i \leftarrow \boldsymbol{x}_{i-1} $ and $\boldsymbol{y}_i\leftarrow \boldsymbol{y}_{i-1}-\sigma_i \boldsymbol{\chi}_e$ Thus,
	\begin{eqnarray}\label{eq2}
	\begin{aligned}
	\mathbb{E} [f(\boldsymbol{x}_i) -f(\boldsymbol{x}_{i-1})  +f(\boldsymbol{y}_i) -f(\boldsymbol{y}_{i-1})]&=\frac{\alpha_i}{\alpha_i+\beta_i}[f(\boldsymbol{x}_{i-1}+\sigma_i \boldsymbol{\chi}_e)-f(\boldsymbol{x}_{i-1})]\\&+\frac{\beta_i}{\alpha_i+\beta_i}[f(\boldsymbol{y}_{i-1}-\sigma_i \boldsymbol{\chi}_e)-f(\boldsymbol{y}_{i-1})]\\&=\frac{\alpha_i^2+\beta_i^2}{\alpha_i+\beta_i}
	\end{aligned}
	\end{eqnarray}
	Next, we upper bound $\mathbb{E}[f(\boldsymbol{opt}_{i-1})-f(\boldsymbol{opt}_{i})]$.
	
	\begin{eqnarray}\label{eq3}
	\begin{aligned}
	&\mathbb{E}[f(\boldsymbol{opt}_{i-1})-f(\boldsymbol{opt}_{i})]\\=&\frac{\alpha_i}{\alpha_i+\beta_i}[f((\boldsymbol{opt} \lor \boldsymbol{x}_{i-1}) \land \boldsymbol{y}_{i-1})-f((\boldsymbol{opt} \lor (\boldsymbol{x}_{i-1}+\sigma_i  \boldsymbol{\chi}_e)) \land \boldsymbol{y}_{i})]\\&+\frac{\beta_i}{\alpha_i+\beta_i}[f((\boldsymbol{opt} \lor \boldsymbol{x}_{i-1} ) \land \boldsymbol{y}_{i-1})-f((\boldsymbol{opt} \lor \boldsymbol{x}_{i}) \land (\boldsymbol{y}_{i-1}-\sigma_i \boldsymbol{\chi}_e))]\\\leq& \frac{\alpha_i\beta_i}{\alpha_i+\beta_i} 
	\end{aligned}
	\end{eqnarray}
	The final inequality follows by considering two cases. The first case is: $\boldsymbol{y}_i(e)=\boldsymbol{y}_{i-1}(e)-\sigma_i $, $ \boldsymbol{x}_i(e)=\boldsymbol{x}_{i-1}(e)$, the first term of the left-hand side of the last inequality equals zero. There are three subcases,
	
	\textbf{Case 3.1}:
	$(\boldsymbol{x}_i(e)\leq\boldsymbol{y}_i(e)=\boldsymbol{y}_{i-1}(e)-\sigma_i <\boldsymbol{y}_{i-1}(e)\leq \boldsymbol{opt}(e))$. 
	
	Thus
	$(\boldsymbol{opt} \lor \boldsymbol{x}_{i-1} ) \land \boldsymbol{y}_{i-1} = (\boldsymbol{opt} \lor \boldsymbol{x}_{i}) \land (\boldsymbol{y}_{i-1}-\sigma_i  \boldsymbol{\chi}_e)+\sigma_i\boldsymbol{\chi}_e $, and $(\boldsymbol{opt} \lor \boldsymbol{x}_{i}) \land (\boldsymbol{y}_{i-1}-\sigma_i  \boldsymbol{\chi}_e)\geq \boldsymbol{x}_{i-1}$, hence
	\begin{eqnarray}
	\begin{aligned}
	&f((\boldsymbol{opt} \lor \boldsymbol{x}_{i-1} ) \land \boldsymbol{y}_{i-1})-f((\boldsymbol{opt} \lor \boldsymbol{x}_{i}) \land (\boldsymbol{y}_{i-1}-\sigma_i  \boldsymbol{\chi}_e)) \\=& f(\sigma_i  \boldsymbol{\chi}_e|(\boldsymbol{opt} \lor \boldsymbol{x}_{i}) \land (\boldsymbol{y}_{i-1}-\sigma_i \boldsymbol{\chi}_e))\leq f(\sigma_i  \boldsymbol{\chi}_e|\boldsymbol{x}_{i-1})=\alpha_i\nonumber
	\end{aligned}
	\end{eqnarray}
	
	\textbf{Case 3.2}:
	$ (\boldsymbol{opt}(e)\leq \boldsymbol{y}_i(e)=\boldsymbol{y}_{i-1}(e)-\sigma_i <\boldsymbol{y}_{i-1}(e))$. 
	
	In this case, the second term of the left-hand side of inequality (\ref{eq3})  also equals zero, thus inequality (\ref{eq3})  follows.
	
	\textbf{Case 3.3}:
	$(\boldsymbol{x}_i(e)\leq\boldsymbol{y}_i(e)=\boldsymbol{y}_{i-1}(e)-\sigma_i \leq \boldsymbol{opt}(e)<\boldsymbol{y}_{i-1}(e))$. 
	
	We have
	$	(\boldsymbol{opt} \lor \boldsymbol{x}_{i-1} ) \land \boldsymbol{y}_{i-1} = (\boldsymbol{opt} \lor \boldsymbol{x}_{i}) \land (\boldsymbol{y}_{i-1}-\sigma_i  \boldsymbol{\chi}_e)+(\boldsymbol{opt}(e)-\boldsymbol{y}_{i-1}(e)+\sigma_i)\boldsymbol{\chi}_e $. Let $\mu = \boldsymbol{opt}(e)-\boldsymbol{y}_{i-1}(e)+\sigma_i$, then $0<\mu\leq \sigma_i$, and
	\begin{eqnarray}
	\begin{aligned}
	&f((\boldsymbol{opt} \lor \boldsymbol{x}_{i-1} ) \land \boldsymbol{y}_{i-1})-f((\boldsymbol{opt} \lor \boldsymbol{x}_{i}) \land (\boldsymbol{y}_{i-1}-\sigma_i  \chi_e)) \\=& f(\mu\boldsymbol{\chi}_e|(\boldsymbol{opt} \lor \boldsymbol{x}_{i}) \land (\boldsymbol{y}_{i-1}-\sigma_i  \boldsymbol{\chi}_e))\\\leq& f(\mu\boldsymbol{\chi}_e| \boldsymbol{y}_{i-1}-\sigma_i  \boldsymbol{\chi}_e)
	\leq 0\leq\alpha_i \nonumber
	\end{aligned}
	\end{eqnarray}
	The line 18-23 in the algorithm guarantees that  $f( -\boldsymbol{\chi}_e|\boldsymbol{y}_{i-1}-(\sigma -1)\boldsymbol{\chi}_e))\geq 0$. By lemma \ref{lemma3}, we have $f( -\sigma\boldsymbol{\chi}_e|\boldsymbol{y}_{i-1})\geq 0$.  Also $ f( -\mu\boldsymbol{\chi}_e|\boldsymbol{y}_{i-1}- (\sigma-\mu)\boldsymbol{\chi}_e)\geq 0$. Thus, 
		\begin{eqnarray}
	\begin{aligned}
  f( -\mu\boldsymbol{\chi}_e|\boldsymbol{y}_{i-1}- (\sigma-\mu)\boldsymbol{\chi}_e)= -f(\mu\boldsymbol{\chi}_e| \boldsymbol{y}_{i-1}-\sigma_i  \boldsymbol{\chi}_e) \geq 0\nonumber
		\end{aligned}
	\end{eqnarray}
By now, we show that the inequality (\ref{eq3}) holds for the case $\boldsymbol{y}_i(e)=\boldsymbol{y}_{i-1}(e)-\sigma_i $,  $ \boldsymbol{x}_i(e)=\boldsymbol{x}_{i-1}(e)$. The other case is that $\boldsymbol{y}_i(e)=\boldsymbol{y}_{i-1}(e) $,  $ \boldsymbol{x}_i(e)=\boldsymbol{x}_{i-1}(e)+\sigma_i$, which is analogous to the previous case, we omit the proof here.

	Now we show inequality (\ref{eq3})  follows for all situations. By (\ref{eq2}) and (\ref{eq3}) inequality (\ref{eq1}) holds if
	\begin{eqnarray}
	\frac{\alpha_i\beta_i}{\alpha_i+\beta_i} \leq \frac{1}{2} \cdot \frac{\alpha_i^2+\beta_i^2}{\alpha_i+\beta_i}\nonumber
	\end{eqnarray}
	which can easily be verified. 
\end{proof}	

\begin{theorem}
	Algorithm 1 is  a randomized $O(n\log B)$ time (1/2)-approximation algorithm for the DR-Submodular Maximization problem.
\end{theorem}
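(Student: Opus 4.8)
The plan is to prove the two assertions—running time $O(n\log B)$ and the $1/2$ approximation guarantee—separately, with the approximation bound following from Lemma~\ref{lemma5} by a single global telescoping argument.

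First, for the running time I would count oracle queries. For a fixed coordinate $e$, the two preliminary binary searches in lines 3--6 locate the sign changes of the monotone (non-increasing) functions $\phi$ and $\psi$ over $\{0,\dots,B\}$, each in $O(\log B)$ queries. The main loop also terminates in $O(\log B)$ iterations, because the step $\sigma_i=\lfloor(\boldsymbol{y}_{i-1}(e)-\boldsymbol{x}_{i-1}(e))/2\rfloor$ halves the gap $\delta_i=\boldsymbol{y}_i(e)-\boldsymbol{x}_i(e)$ at every step, so $\theta_e\le\lceil\log_2 B\rceil$. Since each of the $n=|E|$ coordinates is processed once and each iteration makes a constant number of oracle calls, the total is $O(n\log B)$.

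Second, for the approximation ratio the key is to chain the per-iteration bound of Lemma~\ref{lemma5} into one global inequality. I would first fix a coordinate $e$ and sum inequality~(\ref{eq1}) over $i=1,\dots,\theta_e$; both sides telescope, giving
\begin{align*}
\mathbb{E}\big[f(\boldsymbol{opt}_0^e)-f(\boldsymbol{opt}_{\theta_e}^e)\big]\le \tfrac{1}{2}\,\mathbb{E}\big[f(\boldsymbol{x}_{\theta_e}^e)-f(\boldsymbol{x}_0^e)+f(\boldsymbol{y}_{\theta_e}^e)-f(\boldsymbol{y}_0^e)\big].
\end{align*}
Next I would splice these per-coordinate inequalities across all of $E$. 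The crucial bookkeeping observation is that the states carry over between consecutive coordinates: if $e'$ follows $e$, then $\boldsymbol{x}_0^{e'}=\boldsymbol{x}_{\theta_e}^e$ and $\boldsymbol{y}_0^{e'}=\boldsymbol{y}_{\theta_e}^e$, hence $\boldsymbol{opt}_0^{e'}=(\boldsymbol{opt}\lor\boldsymbol{x}_{\theta_e}^e)\land\boldsymbol{y}_{\theta_e}^e=\boldsymbol{opt}_{\theta_e}^e$, so the combined sequence telescopes cleanly. Its first term is $f((\boldsymbol{opt}\lor\boldsymbol{0})\land\boldsymbol{B})=f(\boldsymbol{opt})$ and its last term is $f$ of the output, since the invariant $\boldsymbol{opt}_{\theta_e}^e(e)=\boldsymbol{x}_{\theta_e}^e(e)=\boldsymbol{y}_{\theta_e}^e(e)$ forces the final $\boldsymbol{x}=\boldsymbol{y}$.

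Summing everything, I expect to reach
\begin{align*}
f(\boldsymbol{opt})-\mathbb{E}[f(\boldsymbol{x})]\le \tfrac{1}{2}\big(2\,\mathbb{E}[f(\boldsymbol{x})]-f(\boldsymbol{0})-f(\boldsymbol{B})\big),
\end{align*}
and then invoke the non-negativity of $f$, so that $f(\boldsymbol{0})+f(\boldsymbol{B})\ge 0$, to rearrange into $\mathbb{E}[f(\boldsymbol{x})]\ge\tfrac{1}{2}f(\boldsymbol{opt})$, which is precisely the claimed $1/2$-approximation. I anticipate that the main obstacle is not any individual inequality—the heavy analytic work is already carried out in Lemma~\ref{lemma5}—but the careful bookkeeping of the chained telescope: verifying that the $\boldsymbol{opt}_i^e$ variables agree at coordinate boundaries, that the clamping in lines 18--23 is consistent with the threshold structure guaranteed by the preliminary searches, and that the boundary terms collapse exactly to $f(\boldsymbol{opt})$ and $f(\boldsymbol{x})$.
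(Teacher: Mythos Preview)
Your proposal is correct and follows essentially the same approach as the paper: telescope the per-iteration inequality of Lemma~\ref{lemma5} over all iterations and then across all coordinates, identify the initial term with $f(\boldsymbol{opt})$ and the final term with the algorithm's output, and use non-negativity of $f$ to discard $f(\boldsymbol{0})+f(\boldsymbol{B})$; the running-time argument via halving the gap is also the paper's. You are simply more explicit than the paper about the cross-coordinate bookkeeping (that $\boldsymbol{x}_0^{e'}=\boldsymbol{x}_{\theta_e}^e$, etc.) and about where non-negativity is invoked, but the structure of the argument is identical.
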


\begin{proof}
	Summing up lemma \ref{lemma4} for every $1 \leq i \leq \theta_e$ for each $e\in E$ gives
	\begin{eqnarray}
	\begin{aligned}\
	&\sum_{e \in E}\sum_{i=1}^ {\theta_e}\mathbb{E}[f(\boldsymbol{opt}_{i-1}^e)-f(\boldsymbol{opt}_{i}^e)]\\\leq& \frac{1}{2}\sum_{e \in E}\sum_{i=1}^ {\theta_e} \mathbb{E} [f(\boldsymbol{x}_i^e) -f(\boldsymbol{x}_{i-1}^e)  +f(\boldsymbol{y}_i^e) -f(\boldsymbol{y}_{i-1}^e)]
	\end{aligned}
	\end{eqnarray}
	The above sum is telescopic. We define that the algorithm executes on the vector coordinates ordered by $e_1,\dots,e_n$. Collapse the inequality, we get
	\begin{eqnarray}
	\begin{aligned}
	f(\boldsymbol{opt}_{0}^{e_1})-f(\boldsymbol{opt}_{\theta_e}^{e_n}) &\leq \frac{1}{2}\mathbb{E} [f(\boldsymbol{x}_{\theta_{e_n}}^{e_n}) -f(\boldsymbol{x}_{0}^{e_1})  +f(\boldsymbol{y}_{\theta_{e_n}}^{e_n}) -f(\boldsymbol{y}_{0}^{e_1})]\\&\leq \frac{1}{2}\mathbb{E}[f(\boldsymbol{x}_{\theta_{e_n}}^{e_n})+f(\boldsymbol{y}_{\theta_{e_n}}^{e_n})]
	\end{aligned}
	\end{eqnarray}
	Recalling the definitions of $\boldsymbol{opt}_i^e$,  $\boldsymbol{opt}_{0}^{e_1}=\boldsymbol{opt}$, $\boldsymbol{opt}_{\theta_e}^{e_n}=\boldsymbol{x}_{\theta_{e_n}}^{e_n} =\boldsymbol{y}_{\theta_{e_n}}^{e_n}$ is the output solution, thus $\mathbb{E}[f(\boldsymbol{x}_{\theta_{e_n}}^{e_n})]=\mathbb{E}[f(\boldsymbol{y}_{\theta_{e_n}}^{e_n})] \geq f(\boldsymbol{opt})/2$.
	It is clear that the algorithm makes  $O(n\log B)$ oracle calls since for each coordinate $e\in E$ the number of oracle calls is at most $\log B$ and there are $n=|E|$ coordinates.
\end{proof}

\section{Conclusions}
In this paper, we study the non-monotone DR-submodular maximization problem over bounded integer lattice, and specifically we study a Profit Maximization problem in social networks and show its DR-submodularity. By theoretically mining the special property of DR-submodular function,  we propose a binary search double greedy algorithm to the optimization problem. We show algorithm improves the approximation ratio and significantly reduces the time complexity compared to the existing results.



%
%
%
%

\end{document}